\documentclass[letterpaper]{article}
\usepackage{authblk}
\usepackage{times}
\usepackage{helvet}
\usepackage{courier}
\usepackage{graphicx}
\usepackage{subfig}
\usepackage{amsfonts}
\usepackage{amsmath}
\usepackage{multirow}
\usepackage{amssymb}
\usepackage{amsthm}
\frenchspacing \setlength{\pdfpagewidth}{8.5in}
\setlength{\pdfpageheight}{11in}

\newtheorem{theorem}{Theorem}

\newtheorem{lemma}{Lemma}

\newtheorem{definition}{Definition}

\begin{document}


\title{An Asymptotically Optimal Algorithm for Online Stacking\footnote{A preliminary version of this paper appeared in the proceedings of the 6th International Conference on Computational Logistics, ICCL '15, under the title "Probabilistic Analysis of Online Stacking Algorithms"~\cite{DBLP:conf/iccl2/OlsenG15}.}}

\author{Martin Olsen}
\author{Allan Gross}
\affil{Department of Business Development and Technology \authorcr Aarhus University \authorcr Denmark \authorcr \{\tt martino, agr\}@btech.au.dk}

\maketitle

\begin{abstract}
  Consider a storage area where arriving items are stored temporarily in bounded capacity stacks
  until their departure. We look into the problem of deciding where to
  put an arriving item with the objective of minimizing the maximum number of stacks used over time. The decision has to be made as soon as an item arrives, and
  we assume that we only have information on the departure times for
  the arriving item and the items currently at the storage area. We
  are only allowed to put an item on top of another item if the item
  below departs at a later time. We refer to this problem as online stacking.
  We assume that the storage time intervals are picked i.i.d. from $[0, 1] \times [0, 1]$ using an unknown distribution with a bounded probability density function. Under this mild condition, we present a simple polynomial
  time online algorithm and show that the competitive ratio converges to $1$ in probability. The result holds if the stack
  capacity is $o(\sqrt{n})$, where $n$ is the number of items, including the realistic case where the capacity is a constant. Our experiments show that our results also have practical relevance. To the best of our knowledge, we are the first to present an asymptotically optimal algorithm for online stacking, which is an important problem with many real-world applications within computational logistics.

\end{abstract}

\section{Introduction}

In this paper, we consider the situation that some items arrive at a
storage location where they are temporarily stored in LIFO stacks
until their departure. When an item arrives, we are faced with the
problem of deciding where to store the item. We will refer to this
problem as the stacking problem. The stacking problem has many
applications within real-world logistics. As an example, the items
could be containers, and the storage location could be a container
terminal or a container ship~\cite{Borgman2010}. The items could also
be steel bars~\cite{Rei13} and trains~\cite{Demange2012}, or the storage
location could simply be a warehouse storing anything that could be
stacked on top of each other.

We focus on the variant of the stacking problem given by the following
assumptions: 1) We have to make a decision on where to store an item
as soon as it arrives. When an item $i$ arrives at time $x_i$, we are
informed on the departure time $y_i$ of the item, but we have no
information on future items. In other words, we look at an {\em
  online} version of the problem, and we look for online algorithms
solving the problem. 2) The numbers $x_i$ and $y_i$ could be any real
numbers. This means that we restrict our attention to what we will
refer to as the {\em continuous} case as opposed to the {\em discrete}
case, where we only have a few possibilities for $x_i$ and $y_i$. 3) We
are only allowed to put an item $i$ on top of an item $j$ if $y_i \leq
y_j$. Another way of saying this is that we do not allow {\em
  rehandling}/{\em relocations}/{\em overstowage} of items. 4) The objective is to
minimize the maximum number of stacks in use over time given a bound $h$ on the stacking
height.

\subsection{Contribution}\label{sec:contribution}

We use the {\em unknown distribution model} for generating
stacking problem instances, where the time intervals for storing the items are picked i.i.d. using an unknown distribution with bounded density:
\begin{definition}\label{def:unknown}
{\bf The Unknown Distribution Model}: Let $n$ pairs $(a, b) \in [0, 1] \times [0, 1]$ be drawn i.i.d. using an unknown distribution with a bounded probability density function. For each pair $(a, b)$, let an item arrive at the storage area at time $x = \min(a, b)$ and leave the storage area at time $y = \max(a,b)$.
\end{definition}
If the reader prefers a model satisfying $a < b$, we can use a density $f$ with $f(a, b)=0$ for $a \geq b$. It is very common to use distributions with bounded densities to model real scenarios. For the univariate case, some examples of such distributions are normal distributions (also called Gaussian distributions), uniform distributions, triangular distributions, and exponential distributions. Assuming independence seems to be reasonable when items arrive at the storage area from different sources. This shows that our model is applicable for many realistic scenarios.

The main contribution of our paper is a simple polynomial time online algorithm called, for the lack of a better name, Algorithm B
for which the following holds for stack capacity $h = o(\sqrt{n})$ including the realistic case where $h$ is a fixed constant: For any positive real numbers $\epsilon_1, \epsilon_2 > 0$,
there exists an $N > 0$ such that Algorithm B uses no more than
$(1+\epsilon_1)\chi_h$ stacks with probability at least $1-\epsilon_2$, if
the number of items is at least $N$, where $\chi_h$ denotes the optimal
number of stacks. In other words, we show that the competitive ratio
of Algorithm B {\em converges to $1$ in probability} if $h = o(\sqrt{n})$:
$$\frac{\chi^\prime_h}{\chi_h} \xrightarrow{p} 1 \mbox{ for } n \rightarrow \infty$$
where $\chi^\prime_h$ denotes the number of stacks used by the algorithm. If $h$ is a constant, then the expected value of the competitive ratio for Algorithm B converges to $1$ in the standard sense of convergence:
$$E\left(\frac{\chi^\prime_h}{\chi_h}\right) \rightarrow 1 \mbox{ for } n \rightarrow \infty \enspace .$$

These results are corollaries of the main theorem of our paper:
\begin{theorem}
\label{thm:maintheorem}
For the unknown distribution model, Algorithm B produces a solution for the online stacking problem ($h$-OVERLAP-COLORING) such that
\begin{equation}
\frac{\chi^\prime_h}{\chi_h} \leq 1 + O(hn^{-\frac{1}{2}}) \mbox{ whp .}
\end{equation}
Algorithm B processes one item in $O(\log n)$ time.
\end{theorem}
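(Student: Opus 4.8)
The key observation is that online stacking is equivalent to online interval graph coloring with a twist: two items can share a stack only if their intervals are "nested-or-disjoint" in a LIFO-compatible way — item $i$ goes on top of $j$ iff $[x_i,y_i] \subseteq [x_j, y_j]$ is not required, only $y_i \le y_j$ and $x_i \ge x_j$ (so the interval of the item below contains the interval above). Thus a stack is a chain of nested intervals. The optimal number of stacks $\chi_h$ without a height bound is the maximum number of pairwise "overlapping" intervals (the clique number of the overlap/circular-arc-like graph), and with the height bound $h$ it is at least $\lceil \omega / h \rceil$ where $\omega$ is that clique number; but more relevantly, $\chi_h \ge \omega$ is false — rather $\chi_h \ge$ (max number of items simultaneously present), call it $D$ (the "depth" or max clique of the interval graph), divided by... no: since a stack holds at most $h$ items all present at a common time only if nested, actually at any fixed time $t$ the items present form a set of intervals all containing $t$, hence pairwise overlapping, hence pairwise "comparable or crossing" — items present at time $t$ that sit in one stack must be nested, so at most $h$ of them per stack, giving $\chi_h \ge D/h$. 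Also trivially $\chi_h \ge \lceil D/h\rceil$ and $\chi_h \le$ anything we can achieve. So the plan is: (1) show the optimum $\chi_h$ is close to $D/h$, or at least lower-bound $\chi_h$ suitably; (2) show Algorithm B uses at most $D/h + (\text{lower order})$ stacks whp; (3) control $D$ via the probabilistic model.

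**Main steps.** First I would recall/establish the description of Algorithm B (from the omitted earlier part of the paper) — presumably it partitions $[0,1]$ into $k$ subintervals and for each arriving item assigns it to a "bucket" based on which subinterval contains its start time (or midpoint), then within each bucket uses a simple greedy/first-fit-by-departure-time stacking; the $O(\log n)$ per-item cost comes from a balanced BST keyed on departure times of stack tops. Second, I would prove the structural fact that the number of stacks Algorithm B uses is bounded by the sum over buckets of (number of items in that bucket that are simultaneously "active" in the relevant sense), and that within a bucket of width $1/k$ the greedy uses essentially $\lceil (\text{max load in bucket})/h \rceil$ stacks up to an additive error — because items whose start times all lie in a tiny window of width $1/k$ and which are long enough to matter are "almost nested". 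Third, the probabilistic core: by the bounded-density assumption, $D$, the maximum number of items present at any time, concentrates around $c\,n$ for a constant $c$ determined by the density (this is a standard concentration argument — $D$ is the sup of an empirical process, controlled by e.g. a union bound over a fine grid of times plus a Chernoff/Bernstein bound, or by VC-type uniform convergence since intervals form a VC class). Likewise the per-bucket loads concentrate. Choosing $k = k(n)$ growing slowly (e.g. $k \asymp n^{1/2}/h$ or similar) balances two error terms: the "boundary loss" from items straddling bucket edges (order $n/k$ items mis-handled, contributing $O(n/(kh))$ extra stacks, or $O(n/k)$ in the worst accounting) against the "rounding loss" of $+1$ per bucket, i.e. $+k$ stacks total. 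Setting these equal against the main term $D/h \asymp n/h$ yields a relative error of order $h n^{-1/2}$, matching the theorem. Fourth, assemble: $\chi'_h \le D/h + O(k + n/(kh))$ and $\chi_h \ge D/h$ (or $\ge \lceil D/h \rceil$, and $D \ge$ a constant times $n$ whp so $D/h \gg$ the error terms), whence the ratio is $1 + O(hn^{-1/2})$ whp.

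**The main obstacle.** The delicate point is the interplay between the height bound $h$ and the bucketing, and showing that the additive overhead is genuinely $O(k + n/(kh))$ rather than $O(k + n/k)$ — i.e., that within a bucket the greedy algorithm really does pack close to $h$ items per stack rather than wasting stacks. This requires arguing that items assigned to the same bucket with overlapping lifetimes are, with high probability, nearly nested (their start times differ by at most $1/k$, so the only obstruction to nesting is a discrepancy of order $1/k$ at the left endpoints), and that the greedy loses at most $O(1)$ stacks per bucket due to this discrepancy — essentially a "shifted interval graph" coloring argument. A secondary subtlety is making the concentration of $D$ and of all $k$ bucket loads simultaneously valid (a union bound over $k = n^{o(1)}$ or even $k = \mathrm{poly}(n)$ buckets is harmless given exponential tail bounds), and ensuring the $h = o(\sqrt n)$ hypothesis is exactly what makes the error term $o(1)$. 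I expect the interval-nesting/greedy-efficiency lemma inside a bucket to be where most of the real work lies; the rest is bookkeeping and standard concentration.
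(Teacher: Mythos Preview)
Your proposal rests on a guess about what Algorithm~B is, and the guess is wrong. There is no bucketing of $[0,1]$ into $k$ subintervals and no parameter $k$ to tune. Algorithm~B (equivalently, the offline Algorithm~A) does the following: (i) partition $\mathcal{I}$ into a \emph{minimum number $c$ of chains} of nested intervals, using Patience Sorting on the sequence of right endpoints ordered by left endpoint; (ii) split each chain into sub-chains of length at most $h$; (iii) greedily color the interval graph formed by the \emph{bottoms} of these sub-chains; (iv) propagate each bottom's color up its sub-chain. The structural lemma is then simply $\chi'_h \le \omega'/h + c$, where $\omega'$ is the interval-graph clique number (your $D$): at any point $x$, each sub-chain bottom that is not already a bottom of one of the original $c$ chains sits on top of $h-1$ other intervals containing $x$, so the number of bottoms through $x$ is at most $\omega'/h + c$.

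The probabilistic heart of the argument --- and the idea you are missing --- is that the minimum number of chains $c$ equals the minimum number of \emph{decreasing subsequences} of the right-endpoint sequence, which is the length $L_n$ of the longest \emph{increasing} subsequence. For the uniform model this is the classical Ulam problem, and one invokes the known facts $\mathbb{E}[L_n] \le 2\sqrt{n}$ and $\mathrm{Var}(L_n) = O(n^{1/3})$ to get $c \le 5\sqrt{n}$ whp. The unknown-distribution case is reduced to the uniform case by a mixture/coupling argument (write the uniform density as $\tfrac{1}{B} f + (1-\tfrac{1}{B})g$ for some density $g$, sample from the mixture, and observe that deleting the $g$-points can only decrease $c$). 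Combining $c = O(\sqrt{n})$ whp with $\chi_h \ge \omega'/h$ and $\omega' = \Omega(n)$ whp (which you did correctly identify) gives the theorem directly.

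Your bucketing sketch, even taken on its own terms, does not close: the obstacle you flag --- that items with start times in the same narrow window need \emph{not} be nearly nested, since their right endpoints are unconstrained --- is exactly the reason a pure start-time bucketing cannot control the number of stacks per bucket to within $O(1)$. The paper sidesteps this entirely by working with chains (true nesting) rather than spatial buckets, and the $\sqrt{n}$ comes from longest-increasing-subsequence asymptotics, not from balancing $k$ against $n/k$.
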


To the best of our knowledge, we are the first to present an asymptotically optimal polynomial time online algorithm for stacking -- an offline version has not been presented either. Similar algorithms like Algorithm B have been presented earlier in the literature~\cite{Borgman2010,Duinkerken2001,HAMDI2012357,Wang2014}, so the most important part of the contribution is the formal proof of asymptotic optimality under mild conditions.

We also verify the results experimentally using two types of distributions and instances with $2000 \leq n \leq 200000$ and $h=5$. For {\underline {all}} our instances, $\frac{\chi^\prime_h}{\chi_h} \leq 1 + kn^{-\frac{1}{2}}$ for a moderate constant $k$ depending on the distribution involved, indicating that our results also have practical importance. 

\subsection{Related Work}\label{sec:relatedwork}

A preliminary version of this paper~\cite{DBLP:conf/iccl2/OlsenG15} was presented at the conference ICCL 2015. The results in the present version are more generic and stronger since they are based on the unknown distribution model as compared to the results obtained in the preliminary version, which were based on a uniform distribution on the input. The present version furthermore includes a section with experiments.


The offline variant of the stacking problem where all information is
provided before any decisions are made is NP-hard for any fixed bound
$h \geq 6$ on the stacking height~\cite{Cornelsen2007} as can be seen
by reduction from the coloring problem on permutation
graphs~\cite{Jansen2003}. To the best of our knowledge, the computational complexity for the case $2 \leq h \leq 5$ is an open problem for the offline case. This variant of the problem is also NP-hard
in the unbounded case as shown by Avriel et
al.~\cite{Avriel2000}. Tierney et al.~\cite{kshift-dam} show that the
problem of deciding if it is possible to accommodate all the items in a
{\em fixed} number of bounded capacity stacks without relocations can
be solved in polynomial time, but the running time of their offline
algorithm is huge even for a small (fixed) number of stacks.

Cornelsen and Di Stefano~\cite{Cornelsen2007} and Demange et
al.~\cite{Demange2012} consider the problem in the context of
assigning tracks to trains arriving at a train
station/depot. Cornelsen and Di Stefano look at unbounded capacity
stacks (train tracks) whereas Demange et al. consider unbounded as well as
bounded capacity stacks. For unbounded stack capacity, Demange et al. show that no online stacking algorithm has a constant competitive ratio. In addition, they present lower and upper bounds for
the competitive ratio with some improvements added later by Demange and Olsen~\cite{DBLP:conf/walcom/DemangeO18}. For bounded capacity stacks, Demange et al.~\cite{Demange2012} present lower and upper bounds
around $2$ for the competitive ratio for online stacking restricted to the
situation where all trains are at the train depot at some point in
time. This condition is known as {\em the midnight condition}. It is well-known that the stacking problem can be be solved exactly and online in polynomial time for the unbounded stack capacity case with the midnight condition by using the Patience Sorting method presented later in this paper.

Simple heuristics for online stacking similar to Algorithm B have been presented by Borgman et al.~\cite{Borgman2010}, Duinkerken et al.~\cite{Duinkerken2001}, Hamdi et al.~\cite{HAMDI2012357}, and Wang et al.~\cite{Wang2014} without providing a proof of asymptotic optimality. 
Finally, we mention the work of Rei and Pedroso~\cite{Rei13} and
K\"onig et al.~\cite{Konig2007} on related problems within the steel
industry as well as the PhD thesis by Pacino~\cite{PacinoPhD} on container
ship stowage.

\subsection{Outline of the Paper}

In Section~\ref{sec:preliminairies}, we look at the link between
stacking problems and the coloring problems for overlap graphs and
interval graphs and introduce some terminology used in this paper. We
also consider some results from the field of probability theory that
form the basis for the probabilistic analysis of our online
algorithm. Our algorithm is introduced in an offline and an online
version in Section~\ref{sec:algorithm}. The
analysis of the algorithm and our main result are presented in
Section~\ref{sec:probabilistic_analysis}, and finally, we verify our results experimentally in Section~\ref{sec:experiments}.

\section{Preliminaries}\label{sec:preliminairies}

In this section, we present most of the terminology used in this paper
and some results from probability theory, which we will use later.

\subsection{Connections to Graph Coloring}\label{sec:coloring}

For each item $i$, we have an interval $I_i=[x_i, y_i]$ specifying the
time interval for which the item has to be temporarily stored. To make
it easier to formulate the constraint on the stacking height, we assume realistically
that items cannot arrive and depart at exactly the same time. This assumption is consistent with the unknown distribution model that generates 
storage time intervals having pairwise distinct endpoints with probability $1$.

It is well-known that the problem we consider can be formulated
as a graph coloring problem~\cite{Avriel2000}, and we will use graph coloring terminology
in the remaining part of the paper in order to make the presentation
generic. We say that two intervals $I_1=[x_1, y_1]$ and $I_2=[x_2,
y_2]$ {\em overlap} if and only if $x_1 < x_2 < y_1 < y_2$ or $x_2 <
x_1 < y_2 < y_1$. We can put an item on top of another item if and
only if their corresponding intervals do not overlap so our problem
can now be formally defined as follows, where $h$ is the maximum
allowed stack height:
\begin{definition}\label{def:problem} The $h$-OVERLAP-COLORING
  problem:
\begin{itemize}
\item Instance: A set of $n$ intervals $\mathcal{I} = \{I_1, I_2,
  \ldots , I_n\}$, where all the endpoints of the intervals are
  distinct.
\item Solution: A coloring of the intervals using a minimum number of
  colors such that the following two conditions hold:
\begin{enumerate}
\item Any two overlapping intervals have different colors.
\item For any real number $t$ and any color $d$, there will be no more
  than $h$ intervals with color $d$ that contain $t$.
\end{enumerate}
\end{itemize}
\end{definition}
It should be stressed that we look for online algorithms that process the intervals in order of increasing starting
points.

The problem can be viewed as a graph coloring problem for the graph
with a vertex for each interval and an edge between any two vertices
where the corresponding intervals overlap. Such a graph is known as an
{\em overlap graph}. As mentioned earlier, we let $\chi_h$ denote the minimum number of
colors for a solution.

An {\em interval graph} is a graph in which each vertex corresponds to an
interval and with an edge between two vertices if and only if the
corresponding intervals {\em intersect}. It is well-known that we can
obtain a minimum coloring of an interval graph if we use the following
simple online algorithm to process the intervals in increasing order
of their starting points: If we can reuse a color, we do so --
otherwise we pick a new color that we have not used previously. The
clique number of a graph is the size of a maximum clique. Interval
graphs are members of the family of {\em perfect} graphs, implying that
all interval graphs can be colored with a number of colors
corresponding to their clique number.

\subsection{Increasing Subsequences and Patience Sorting}\label{sec:patiencesorting}

The algorithm we present in Section~\ref{sec:algorithm} and the
probabilistic analysis performed in
Section~\ref{sec:probabilistic_analysis} are based on some results from
the theory on increasing subsequences and the method of Patience
Sorting, which we will introduce next. Patience Sorting~\cite{Aldous99}
is a method originally invented for sorting a deck of cards. Now
imagine that we have a small deck of cards as follows, where the top of
the deck is the leftmost card (the underlined cards will be explained
later):
$$9, \underline{2}, \underline{4}, 8, 1, 7, \underline{6}, 3, 5, \underline{10}$$
We take the top card $9$ and start a new pile. We now remove the
other cards from the initial deck one by one from the top of the
deck. Each time we remove a card, we try to put it in another pile with
a top card of higher value than the removed card. If possible, we choose a pile where the top card has the lowest
value. If not, we start a new pile. Card $2$ goes on top of
card $9$ but we have to start two new piles with cards $4$ and $8$,
respectively. Card $1$ can be put on top of card $2$,
etc. Finally, we face the following four piles:
$$1, 2, 9 \enspace \enspace \enspace \enspace 3, 4 \enspace \enspace \enspace \enspace 5, 6, 7, 8 \enspace \enspace \enspace \enspace 10$$
It is now easy to sort the cards by repeatedly picking the top
card with the lowest value. This is the Patience Sorting method, and we refer the reader to
the work by Aldous~\cite{Aldous99} for more details.

Let $L_n$ be the random variable representing the resulting number of
piles for the Patience Sorting method on a deck with $n$ cards. It is
worth noting that $L_n$ is identical to the length of the longest
increasing subsequence for the sequence of cards defined by the
deck. To illustrate this, there are several increasing subsequences
that have length $4$ for the sequence shown above (for example, the
subsequence $2$, $4$, $6$, $10$, which is underlined) but no increasing
subsequence with length $5$ or more -- and the number of piles needed
is $4$. Each pile represents a decreasing subsequence, and $L_n$ is the
minimum number of decreasing subsequences into which the sequence can
be partitioned. Let $\mu$ and $\sigma$ denote the expected value and
the standard deviation of $L_n$ respectively, under the assumption that
the permutation corresponding to the deck of cards is picked uniformly
at random. The asymptotic behavior of $L_n$ is described as follows,
where $\sigma_{\infty}$ is a positive
constant~\cite{Aldous99,Pilpel1990}:
\begin{equation}
\label{eq:mu}
\mu \leq 2\sqrt{n}
\end{equation}

\begin{equation}
\label{eq:sigma}
\sigma = \sigma_{\infty}n^{\frac{1}{6}}+o(n^{\frac{1}{6}})
\end{equation}
These facts are crucial for the analysis of the online algorithm we
present later in this paper.

\section{The Algorithm}\label{sec:algorithm}

Before we present our stacking strategy, we need to introduce a little
more terminology. A {\em chain} of intervals is a sequence of
intervals $I_1 \supseteq I_2 \supseteq I_3 \supseteq \ldots \supseteq
I_m$. The intervals in a chain represent items that may be stacked on
top of each other. We refer to the intervals $I_1$ and $I_m$ as the
{\em bottom} and the {\em top} of the chain, respectively. For a given
number $h$, we can split a chain into chains of cardinality $h$ or
less in a natural way: The intervals $I_1$ to $I_h$ form the first
chain, the next $h$ intervals $I_{h+1}$ to $I_{2h}$ form the next
chain, etc. A partition of $\mathcal{I}$ into chains is a set of
chains such that each interval is a member of exactly one chain.

We present two versions of our algorithm (named A and B), which produce
the same coloring for any instance of the $h$-OVERLAP-COLORING
problem. Algorithm A is an offline version, and Algorithm B is an
online version. Algorithm A is presented in order to make it easier
for the reader to understand the coloring strategy used.

We are now ready to describe Algorithm A, which consists of $4$ steps
listed in Fig.~\ref{fig:algA}. In the first step, we partition
$\mathcal{I}$ into a minimum number of chains as illustrated in
Fig.~\ref{fig:alg_initial}.  In the second step, we split the chains
into chains of cardinality $h$ or less as described above. The
interval graph of the bottoms of the chains is colored in the third
step using the simple algorithm described in
Section~\ref{sec:coloring}. Finally, in the fourth step, all the
remaining intervals are colored with the color at the bottom of their
chain. Steps $2$, $3$, and $4$ are illustrated in
Fig.~\ref{fig:alg_finish} for the case $h=2$.  It is not hard to see
that the coloring produced satisfies the conditions from
Definition~\ref{def:problem}: All the chains produced in step $2$ have
cardinality at most $h$, and chain bottoms with the same color do not
intersect.

\begin{figure}
  \begin{center}
    \mbox{\begin{minipage}[t]{0cm}
      \begin{tabbing}
        xxx\=xxxxxxx\=xxx\=xxx\=xxx\=xxx\=\kill
        \textbf{Algorithm A($\mathcal{I}$, $h$)}: \\
        \> Step 1:\>  Partition $\mathcal{I}$ into a minimum number $c$ of chains. \\ 
        \> Step 2:\>  Split the chains into chains of cardinality $h$ or less. \\
        \> Step 3:\>  Color the interval graph formed by the bottoms of the chains with \\
        \>\> $\chi^\prime_h$ colors. \\
        \> Step 4:\>  Color any interval not at the bottom of a chain with the color of \\
        \>\>  the bottom of its chain. \\
     \end{tabbing}
   \end{minipage}}
  \end{center}
  \caption{The offline version of our algorithm.}
  \label{fig:algA}
\end{figure}

\begin{figure}
  \centering \subfloat{\label{fig:alg_instance}\includegraphics[scale=0.5]{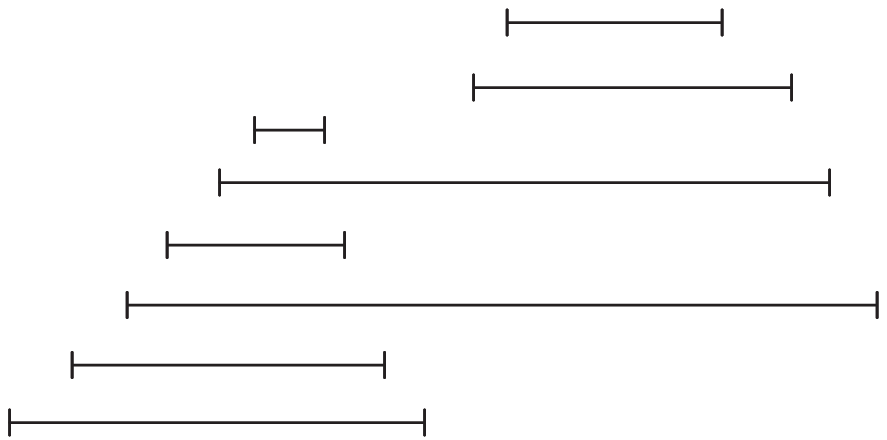}}
  \hspace{1cm} \subfloat{\label{fig:alg_step1}\includegraphics[scale=0.5]{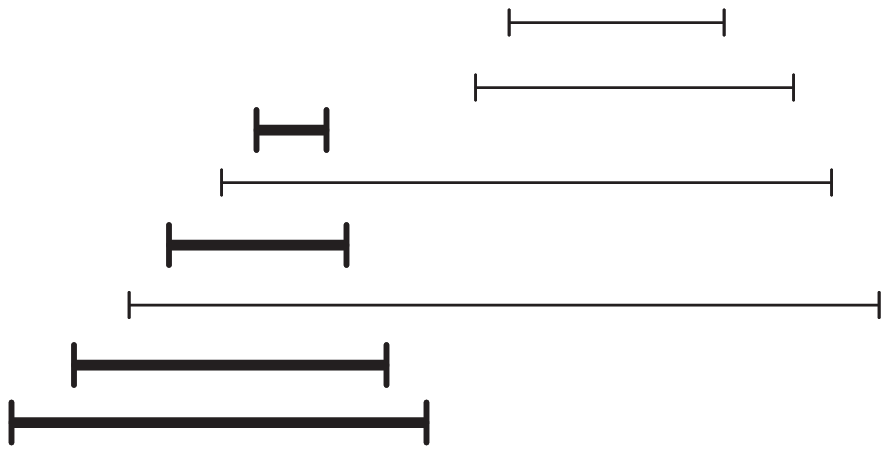}}
  \caption{The initial phase of Algorithm A is illustrated here. To
    the left, we see the intervals forming the instance. The two chains
    created in step $1$ are shown to the right.}
\label{fig:alg_initial}
\end{figure}

\begin{figure}
  \centering
  \subfloat{\label{fig:alg_step2}\includegraphics[scale=0.5]{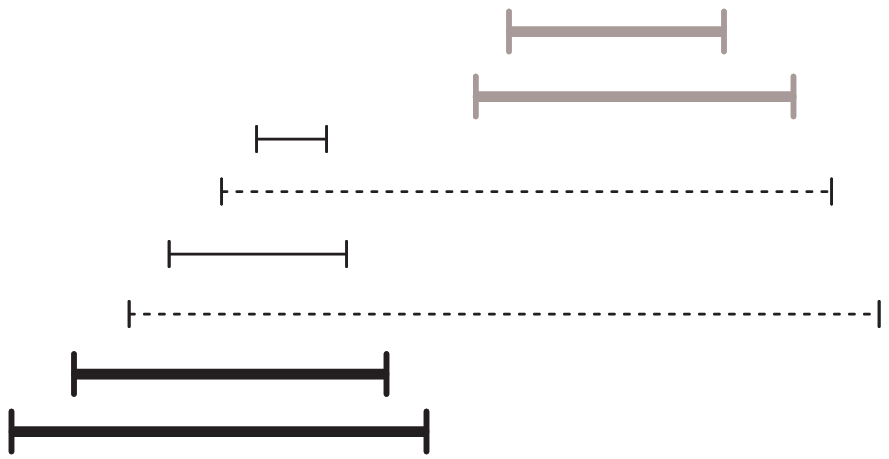}}
  \hspace{1cm}
  \subfloat{\label{fig:alg_step3_4}\includegraphics[scale=0.5]{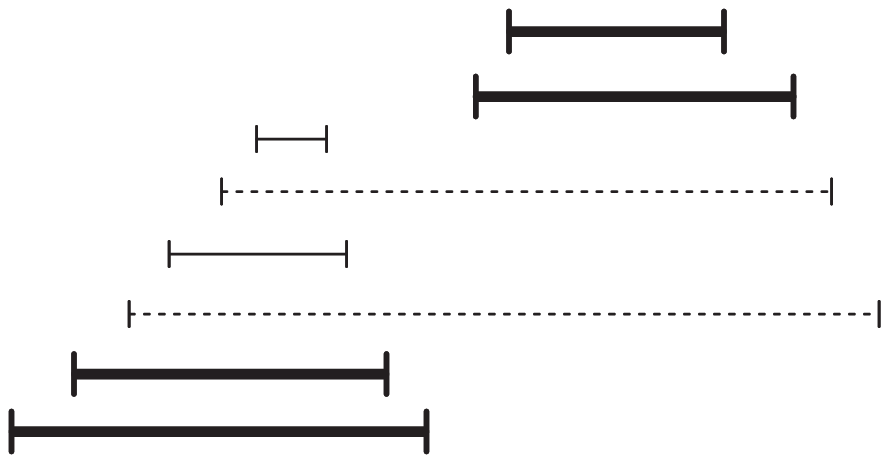}}
  \caption{This figure illustrates the final phase of Algorithm A for
    the case $h=2$. The four chains produced in step $2$ are shown to
    the left, and the coloring produced in steps $3$ and $4$ is
    shown to the right. The algorithm generates a coloring using
    $\chi^\prime_h=3$ colors.}
\label{fig:alg_finish}
\end{figure}

We now prove that it is possible to transform Algorithm A into an
online version, Algorithm B, which is listed in Fig.~\ref{fig:algB}.
\begin{lemma}
\label{lem:online_implementation}
Algorithm B is an online algorithm for the h-OVERLAP-COLORING problem
producing a coloring identical to the coloring produced by
Algorithm A. Algorithm B processes one interval in $O(\log n)$ time.
\end{lemma}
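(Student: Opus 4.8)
The plan is to show that each of the four steps of Algorithm~A can be executed incrementally as the intervals are revealed in increasing order of their left endpoints, so that the online Algorithm~B reproduces the coloring of Algorithm~A, and then to bound the work per interval by $O(\log n)$ using balanced search trees. The key preliminary observation is that if the intervals are listed in increasing order of their left endpoints and we read off the corresponding right endpoints, then a chain $I_1 \supseteq I_2 \supseteq \cdots$ is exactly a decreasing subsequence of that sequence (for $x_i < x_j$ we have $I_i \supseteq I_j$ iff $y_i > y_j$); hence the piles of Patience Sorting are precisely a partition of $\mathcal{I}$ into a minimum number of chains, and we take Step~1 of Algorithm~A to be this partition. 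Accordingly, when a new interval $I=[x_I,y_I]$ arrives, Algorithm~B appends it to the chain whose current top interval $J$ has the smallest right endpoint subject to $y_J > y_I$, and starts a new chain if no such $J$ exists; this is the successor of $y_I$ among the right endpoints of the current chain tops, which a balanced search tree (with the accompanying delete and insert) supplies in $O(\log n)$ time, and $J \supseteq I$ holds because $x_J < x_I$ by arrival order and $y_J > y_I$ by the choice of $J$.

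Steps~2 and~4 are then pure bookkeeping. Each chain stores the number of intervals already placed in its current sub-chain taken modulo $h$, together with the color currently assigned to that sub-chain's bottom; the interval that arrives when this counter is $0$ --- which includes the first interval of a newly created chain --- opens a new sub-chain and becomes its bottom, while every other interval inherits the stored color. Because the intervals of a chain arrive, and are appended, in order of increasing left endpoint, these sub-chains are exactly the blocks of $h$ consecutive chain members cut out in Step~2 of Algorithm~A, and the inherited color is exactly the one Step~4 prescribes.

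The substantive step is Step~3. The sub-chain bottoms are revealed to Algorithm~B in order of their left endpoints, so it suffices to run the greedy interval-graph coloring of Section~\ref{sec:coloring} on them on the fly: a new bottom $B=[x_B,y_B]$ receives the smallest color not held by a previously created bottom $B'$ with $x_{B'} < x_B < y_{B'}$, and we fix this first-fit tie-break so that Algorithm~A's coloring is well defined. To do this in $O(\log n)$ I would exploit that, in the proper coloring being built, the bottoms of any fixed color are pairwise non-intersecting and therefore, in creation order, also ordered by right endpoint; hence it is enough to keep, for each color $c$, the right endpoint $r(c)$ of the most recently created bottom of color $c$, and $B$ may reuse $c$ precisely when $r(c) < x_B$. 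Finding the smallest such $c$ (or allocating a fresh color when none qualifies) together with the update $r(c) \leftarrow y_B$ is a query for the least index whose stored value lies below a threshold, over the at most $n$ color indices, which a segment tree or a balanced search tree answers in $O(\log n)$ time. Putting the four steps together, Algorithm~B assigns every interval the same color as Algorithm~A, hence produces a valid $h$-OVERLAP-COLORING; it spends $O(\log n)$ time per interval; and it is online because every decision uses only $I$ and the intervals that have already arrived.

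I expect the main obstacle to be the careful verification of Step~3 --- that the incremental first-fit over the bottoms, interleaved with the chain and sub-chain updates, yields exactly the coloring that Step~3 of Algorithm~A computes offline --- together with the two supporting structural facts that make the running time go through: that Patience Sorting produces a minimum chain partition (the longest-increasing-subsequence correspondence noted above), and that bottoms sharing a color are pairwise disjoint, so that a single right endpoint per color (maintained as an invariant) carries all the information the greedy rule needs. The remaining verifications are routine.
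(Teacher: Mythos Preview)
Your proposal is correct and follows essentially the same approach as the paper: identify chains with decreasing subsequences of the $y$-values (in $x$-order), implement Step~1 via Patience Sorting, handle Steps~2 and~4 by counting modulo $h$ along each chain, and implement Step~3 by the greedy interval-coloring of the bottoms in left-endpoint order; the $O(\log n)$ bound comes from search-tree lookups for both the Patience Sorting step and the color-reuse step. The only notable difference is that you fix a first-fit tie-break for Step~3 and realize it with a segment tree, whereas the paper leaves the choice of reused color unspecified (``pick any'') and uses a priority queue keyed on expiry time; your version is slightly more careful in making the phrase ``identical to the coloring produced by Algorithm~A'' well defined, but the underlying argument is the same.
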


\begin{proof}
  Let $\pi$ be a permutation of the integers from $1$ to $n$ such that
  $x_{\pi(i)} < x_{\pi(j)}$ for $i < j$. Now we consider the sequence
  where the $i$'th number is $y_{\pi(i)}$. There is a simple
  one-to-one correspondence between a decreasing subsequence of this
  sequence and a chain of intervals from the set $\mathcal{I}$: If we
  start at the bottom of a chain and move upward, then the $x$-values
  increase and the $y$-values decrease. This means that we obtain a
  partition of $\mathcal{I}$ into a minimum number of chains, if we
  apply the Patience Sorting method described in
  Section~\ref{sec:patiencesorting} and partition the sequence into a
  minimum number of decreasing subsequences.

  Algorithm B processes the intervals in increasing order of their starting
  points applying the Patience Sorting method, and decisions on an
  interval are made without considering intervals with bigger starting
  points. The same goes for the splitting into smaller chains as well as the
  coloring of the chain bottoms and the other intervals. This means
  that Algorithm B is an online algorithm producing the same
  coloring as Algorithm A.

  Each step of the Patience Sorting method requires $O(\log n)$ time
  if we use binary search to locate the right pile. Keeping track of
  unused colors can also be handled in $O(\log n)$ time for each step
  if a priority queue is used (a priority queue storing information on when
  the colors expire is used for the set
  $\mathcal{D}$ in Fig.~\ref{fig:algB}).
\end{proof}

\begin{figure}
  \begin{center}
    \mbox{\begin{minipage}[t]{0cm}
      \begin{tabbing}
        xxx\=xxx\=xxx\=xxx\=xxx\=xxx\=\kill
        \textbf{Algorithm B($\mathcal{I}$, $h$)}: \\
        \> Assumption on $\mathcal{I} = \{ [x_1, y_1], [x_2, y_2], \ldots , [x_n, y_n]\}$: $i < j \Rightarrow x_i < x_j$ \\
        \> \ 1:\>  $\mathcal{C} \leftarrow \emptyset$ \\
        \> \ 2:\>  $\mathcal{D} \leftarrow \emptyset$ \\
        \> \ 3:\>  $\chi \leftarrow 0$ \\
        \> \ 4:\>  {\bf for} $i \in \{1, 2, \ldots, n\}$ {\bf do}\\
        \> \ 5:\>\>  $bottom \leftarrow$ false \\
        \> \ 6:\>\>  Let $\mathcal{H}$ be the set of chains in $\mathcal{C}$ where \\
        \>\>\>  the top of the chain contains $I_i$. \\
        \> \ 7:\>\>  {\bf if} $\mathcal{H} = \emptyset$ {\bf then}\\
        \> \ 8:\>\>\>  Add a new chain to $\mathcal{C}$ consisting of $I_i$. \\
        \> \ 9:\>\>\>  $bottom \leftarrow$ true \\
        \> 10:\>\>  {\bf else} \\
        \> 11:\>\>\>  Let $c_J$ be the chain in $\mathcal{H}$ with a top interval \\
        \>\>\>\>  $J[x_J, y_J]$ with the smallest value of $y_J$. \\
        \> 12:\>\>\>  Put $I_i$ on top of $c_J$. \\
        \> 13:\>\>\>  Let $d$ be the color assigned to $J$. \\
        \> 14:\>\>\>  {\bf if} there are less than $h$ intervals in $c_J$ with color $d$ {\bf then} \\
        \> 15:\>\>\>\>  Assign color $d$ to $I_i$. \\
        \> 16:\>\>\>  {\bf else} \\
        \> 17:\>\>\>\>  $bottom \leftarrow$ true \\
        \> 18:\>\>\>  {\bf end if} \\
        \> 19:\>\>  {\bf end if} \\
        \> 20:\>\>  \\
        \> 21:\>\>  {\bf if} $bottom = $ true {\bf then}\\
        \> 22:\>\>\>  Let $\mathcal{G} = \{ (d, y) \in \mathcal{D}: y < x_i \}$ \\
        \> 23:\>\>\>  {\bf if} $\mathcal{G} = \emptyset$ {\bf then}\\
        \> 24:\>\>\>\>  $\chi \leftarrow \chi + 1$ \\
        \> 25:\>\>\>\>  Assign color $\chi$ to $I_i$.\\
        \> 26:\>\>\>\>  $\mathcal{D} \leftarrow \mathcal{D} \cup \{ (\chi, y_i) \} $ \\
        \> 27:\>\>\>  {\bf else} \\
        \> 28:\>\>\>\>  Pick any $(d, y) \in \mathcal{G}$. \\
        \> 29:\>\>\>\>  Assign color $d$ to $I_i$. \\
        \> 30:\>\>\>\>  $\mathcal{D} \leftarrow (\mathcal{D} \setminus \{ (d, y) \}) \cup \{ (d, y_i) \}$ \\
        \> 31:\>\>\>  {\bf end if} \\
        \> 32:\>\>  {\bf end if} \\
        \> 33:\>  {\bf end for} \\
     \end{tabbing}
   \end{minipage}}
  \end{center}
  \caption{The online version of our algorithm. Please note that we
    assume the intervals in $\mathcal{I}$ to appear in increasing
    order of their starting points.}
  \label{fig:algB}
\end{figure}

\section{Probabilistic Analysis}\label{sec:probabilistic_analysis}

Let $\omega^\prime$ be the clique number of the interval graph formed
by the set of intervals $\mathcal{I}$. We remind the reader that $c$
is the minimum number of chains formed in step $1$ of Algorithm A.

\begin{lemma} 
\label{lem:chi_prime_h}
The coloring produced by Algorithm A and B uses $\chi^\prime_h$
colors satisfying
\begin{equation}
  \chi^\prime_h \leq \frac{\omega^\prime}{h} + c \enspace .
\end{equation}
\end{lemma}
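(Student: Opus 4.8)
The plan is to bound separately the two sources of colors used by Algorithm A: the colors used on chain bottoms of "short" chains (those that were created because no chain in $\mathcal{H}$ could host the arriving interval, i.e., genuine chain starts from Step 1) and the colors used on chain bottoms created by the splitting in Step 2. First I would observe that Step 3 colors the interval graph formed by the bottoms of all chains produced after Step 2, and that this interval graph is perfect, so $\chi^\prime_h$ equals its clique number. A clique in the bottoms-interval-graph is a set of chain bottoms whose intervals share a common point $t$; so it suffices to bound, for every real $t$, the number of Step-2 chain bottoms whose interval contains $t$.

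Next I would classify those bottoms at a fixed point $t$. Each such bottom interval contains $t$, and it is either (a) the bottom of one of the original $c$ chains from Step 1, or (b) an interval that became a bottom because Step 2 cut a longer chain into pieces of size $h$. For type (a) there are at most $c$ of them in total, hence at most $c$ at the point $t$. For type (b), the key point is that whenever Step 2 creates a new chain-bottom inside an original chain, that bottom sits directly above a block of exactly $h$ intervals of the same original chain, all of which also contain $t$ (since a chain is nested, if the new bottom's interval contains $t$ then so does everything below it in that chain). So each type-(b) bottom at $t$ can be charged to $h$ distinct intervals of $\mathcal{I}$ that contain $t$, and these charged sets are disjoint across different type-(b) bottoms. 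The number of intervals of $\mathcal{I}$ containing $t$ is at most $\omega^\prime$ by definition of the clique number of the interval graph, so there are at most $\omega^\prime/h$ type-(b) bottoms at $t$. Adding the two bounds gives at most $\omega^\prime/h + c$ bottoms through any point $t$, hence a clique of size at most $\omega^\prime/h + c$, hence $\chi^\prime_h \le \omega^\prime/h + c$.

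The step I expect to require the most care is making the charging argument for type-(b) bottoms precise: I need to argue that the $h$ intervals charged to a type-(b) bottom all contain $t$ and are distinct from the intervals charged to any other bottom. The first follows from nestedness of chains together with the fact that in Step 2 a new chain consists of intervals $I_{kh+1},\dots,I_{(k+1)h}$ of a single original chain, so the new bottom $I_{kh+1}$ lies below the $h$ intervals $I_{(k-1)h+1},\dots,I_{kh}$ (equivalently, above them in the chain ordering depending on orientation) — I should fix the bottom/top convention from Section~\ref{sec:algorithm} and verify the containment direction carefully. Disjointness of the charged blocks is immediate because distinct Step-2 chains partition $\mathcal{I}$. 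The only subtlety is the very first block of each original chain, whose bottom is a type-(a) bottom and is not charged; this is exactly why the additive $+c$ term appears rather than a cleaner bound. Once the convention is pinned down, the rest is routine.
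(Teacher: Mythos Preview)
Your proposal is correct and follows essentially the same approach as the paper: both arguments identify $\chi'_h$ with the maximum, over points $t$, of the number of Step-2 chain bottoms containing $t$, then bound the type-(b) bottoms by a charging argument that assigns $h$ disjoint intervals through $t$ to each such bottom, with the remaining at most $c$ type-(a) bottoms accounting for the additive term. The only cosmetic difference is that the paper charges a type-(b) bottom to itself together with the $h-1$ intervals directly below it, whereas you charge it to the preceding full block of $h$; both choices yield disjoint size-$h$ witness sets, and your caution about fixing the bottom/top convention is well placed but will resolve exactly as you anticipate.
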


\begin{proof}
  For any real number $x$, we let $g_x$ denote the number of intervals
  in $\mathcal{I}$ that contain $x$ and $g^h_x$ denote the number of
  chain bottoms produced in step $2$ of Algorithm A containing $x$. As mentioned in
  Section~\ref{sec:coloring}, any interval graph can be colored with a
  number of colors corresponding to the size of the largest clique of
  the graph:
\begin{equation}
\label{eq:chi_prime_h_1}
\chi^\prime_h = \max_{x} g^h_x \enspace .
\end{equation}
Now consider an interval that is a bottom of a chain produced in step
$2$ of Algorithm A but not a bottom of one of the chains produced in
step $1$. If such an interval contains a number $x$, then the $h-1$
intervals directly below it in the chain will also contain $x$. There
are at least $g^h_x - c$ such intervals that contain $x$ so we obtain
the following inequality:
\begin{equation}
  (g^h_x - c)h \leq g_x \enspace .
\end{equation}
We now rearrange this inequality:
\begin{equation}
\label{eq:chi_prime_h_2}
\max_{x} g^h_x \leq \frac{\max_{x} g_x}{h} + c \enspace .
\end{equation}
Next, we use~(\ref{eq:chi_prime_h_1}) and $\omega^\prime = \max_{x}
g_x$.
\end{proof}
Our aim is
to show that the competitive ratio $\frac{\chi^\prime_h}{\chi_h}$ of
Algorithm B is close to $1$ with high probability. Formally, we say
that an event $E_n$ occurs {\em with high probability}, abbreviated
whp, if $P(E_n)~\rightarrow~1$ for $n~\rightarrow~\infty$. There is a number $t$
contained in $\omega^\prime$ intervals implying $\chi_h \geq
\frac{\omega^\prime}{h}$. Using Lemma~\ref{lem:chi_prime_h}, we can
conclude that the competitive ratio is not bigger than
$1+c/\left(\frac{\omega^\prime}{h}\right)$. We will now show that the
competitive ratio is $1 + O(hn^{-\frac{1}{2}})$ whp. The strategy of
our proof is to show that $c = O(\sqrt{n})$ whp and that $\omega^\prime = \Omega(n)$ whp, and then combine these results.

For a brief moment, we leave the unknown distribution model and present a lemma for a simpler model for generating the instances: {\em the uniform model}. This model is obtained by substituting the unknown distribution in the unknown distribution model (see Definition~\ref{def:unknown}) with the uniform distribution on $[0, 1] \times [0, 1]$. This is the only place in the paper where we are {\em not} using the unknown distribution model.

\begin{lemma}
\label{lem:uniform}
For the uniform model, the set of intervals $\mathcal{I}$ can be partitioned into $c$ chains such that
$$c \leq 5\sqrt{n} \mbox{ whp} \enspace .$$
\end{lemma}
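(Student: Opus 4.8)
The plan is to identify $c$ with the length of a longest increasing subsequence, to transfer the problem from the triangle on which our sample lives to the full unit square by a coupling, and then to quote the patience sorting estimates~(\ref{eq:mu}) and~(\ref{eq:sigma}).

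First I would reuse the correspondence set up in the proof of Lemma~\ref{lem:online_implementation}: if we sort the intervals by increasing starting point and read off the departure times, the chains of $\mathcal{I}$ are exactly the decreasing subsequences of the resulting sequence, so a minimum chain partition has size $c$ equal to the length of a longest \emph{increasing} subsequence of that sequence. Geometrically, placing the point $p_i=(x_i,y_i)$ for each interval, $c$ is the largest number of the points $p_1,\dots,p_n$ that form a strictly increasing staircase, i.e.\ that can be ordered so that both coordinates strictly increase. In the uniform model $(a,b)$ is uniform on $[0,1]^2$, hence each $p_i=(\min(a,b),\max(a,b))$ is uniform on the triangle $\Delta=\{(x,y):0\le x\le y\le 1\}$, and $p_1,\dots,p_n$ are i.i.d.

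Next comes the coupling. Draw an i.i.d.\ sequence $R_1,R_2,\dots$ of uniform points in $[0,1]^2$ and let $\tau_1<\tau_2<\cdots$ enumerate the indices $j$ with $R_j\in\Delta$; then $R_{\tau_1},\dots,R_{\tau_n}$ are (unconditionally) i.i.d.\ uniform on $\Delta$, so we may take them to be $p_1,\dots,p_n$. A longest increasing staircase cannot grow when we pass to a subset, so $c$ is at most the length $L$ of a longest increasing staircase among $R_1,\dots,R_{\tau_n}$. Since each $R_j$ falls in $\Delta$ independently with probability $\tfrac12$, $\tau_n$ is a sum of $n$ i.i.d.\ geometric variables with mean $2n$, and a Chernoff bound (or the weak law) gives $\tau_n\le 3n$ whp; on that event $L\le L_{3n}$, the length of a longest increasing staircase among $R_1,\dots,R_{3n}$. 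Because the coordinates of i.i.d.\ uniform points in the square are independent, the $x$-order and $y$-order of $R_1,\dots,R_{3n}$ are independent uniform permutations, so $L_{3n}$ has the same distribution as the length of a longest increasing subsequence of a uniform random permutation of $3n$ elements -- the patience sorting pile count for $3n$ cards. Hence by~(\ref{eq:mu}), $E(L_{3n})\le 2\sqrt{3n}=2\sqrt{3}\,\sqrt n$, and by~(\ref{eq:sigma}) the standard deviation of $L_{3n}$ is $O(n^{1/6})$, so Chebyshev's inequality with a window of width $\sqrt n$ gives $L_{3n}\le(2\sqrt{3}+1)\sqrt n<5\sqrt n$ whp. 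Intersecting the two whp events yields $c\le 5\sqrt n$ whp.

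The only point that needs care is this coupling: one must check that rejection sampling produces points that are genuinely (unconditionally) i.i.d.\ uniform on $\Delta$ -- so the statement really concerns the uniform model -- while at the same time lying inside a set of at most $3n$ i.i.d.\ uniform points of the square, on which the clean distributional facts about longest increasing subsequences apply. Everything else is bookkeeping together with~(\ref{eq:mu})--(\ref{eq:sigma}). A first-moment bound applied directly to the (non-uniform) permutation induced by the triangle sample is also possible, but it requires evaluating a simplex-type volume and is messier than going through the square, so I would prefer the coupling.
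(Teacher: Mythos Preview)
Your argument is correct, but the route differs from the paper's. You work with the processed points $p_i=(\min(a_i,b_i),\max(a_i,b_i))$, which live on the triangle $\Delta$; this makes the identification ``$c=$ longest increasing staircase'' exact, but forces you to couple back to the full square via rejection sampling and pay a factor $\sqrt{3}$ (from $L_{3n}$). The paper instead works with the \emph{raw} pairs $(a_i,b_i)$, which are already i.i.d.\ uniform on the square: sorting by $a$ and reading off $b$ gives a genuinely uniform permutation, so the pile count $c'$ is distributed exactly as $L_n$ with no coupling needed. The price is that the resulting piles are not quite chains of intervals, because the interval is $[\min(a,b),\max(a,b)]$ rather than $[a,b]$; the paper observes that along any pile the $a$'s increase and the $b$'s decrease, so there is at most one index where the sign of $b-a$ flips, and splitting there turns each pile into at most two bona fide chains, giving $c\le 2c'$. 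Chebyshev on $L_n$ with threshold $\tfrac{5}{2}\sqrt n$ then finishes.

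So the paper trades your coupling for a ``split each pile in two'' trick. Its approach is a touch more direct here (no auxiliary random variables, and it uses $L_n$ rather than $L_{3n}$), while your coupling idea is closer in spirit to what the paper does one lemma later, where a mixture/coupling is used to reduce the unknown-distribution case to the uniform one.
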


\begin{proof}
  Let $(a_i, b_i)$ denote the $i$'th pair drawn using the uniform model. We introduce a permutation
  $\pi^\prime$ on the integers from $1$ to $n$ defined by
  $a_{\pi^\prime(i)} < a_{\pi^\prime(j)}$ for $i < j$. We now look at
  the sequence of $b$-values with $b_{\pi^\prime(i)}$ as the $i$'th
  number in the sequence. We use the Patience Sorting method from
  Section~\ref{sec:patiencesorting} on the $b$-sequence and obtain
  $c^\prime$ decreasing subsequences. We split each subsequence into
  two decreasing subsequences if there is a point where the $b$-values
  become lower than their corresponding $a$-values. It is not hard
  to see that we can form a chain of intervals for each of the up to
  $2c^\prime$ subsequences we obtain by the splitting procedure (see
  Fig.~\ref{fig:uniform}).
\begin{figure}
  \centering 
  \includegraphics[scale=0.5]{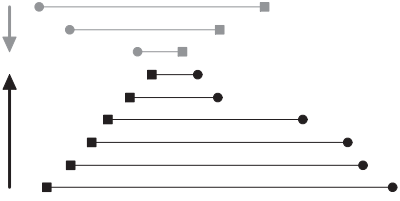}
  \caption{The figure shows a decreasing subsequence for the sequence
    of $b$-values. The squares and circles correspond to $a$-values
    and $b$-values, respectively. The decreasing subsequence can be
    split into a grey chain and a black chain of intervals.}
  \label{fig:uniform}
\end{figure}
Since $c \leq 2c^\prime$, we have the following:
\begin{equation}
\label{eq:UpperBound1}
P(c \geq 5\sqrt{n}) \leq P\left(c^\prime \geq \frac{5}{2} \sqrt{n}\right) \enspace . 
\end{equation}
The $a$- and $b$-values are independent for the uniform model, so $c^\prime$ and $L_n$ have the same distribution, where
$L_n$ is the length of the longest increasing subsequence for a
permutation of $n$ numbers chosen uniformly at random (see
Section~\ref{sec:patiencesorting}):
\begin{equation}
\label{eq:UpperBound1.5}
P\left(c^\prime \geq \frac{5}{2} \sqrt{n}\right) = P\left(L_n \geq \frac{5}{2} \sqrt{n}\right) \enspace .
\end{equation}
Using~(\ref{eq:mu}), we get the following:
\begin{equation}
\label{eq:UpperBound2}
P\left(L_n \geq \frac{5}{2} \sqrt{n}\right) \leq P\left(|L_n - \mu| \geq \frac{1}{2}\sqrt{n}\right) \enspace .
\end{equation}
From~(\ref{eq:sigma}), we observe that $\sigma \leq \frac{3}{2}\sigma_
\infty n^{\frac{1}{6}}$ for $n$ sufficiently big. By using Chebyshevs
inequality~\cite{Kobayashi2012}, we now get the following for $n$
sufficiently big:
\begin{equation}
\label{eq:UpperBound3}
P\left(|L_n - \mu| \geq \frac{1}{2}\sqrt{n}\right) \leq \frac{\sigma^2}{(\frac{1}{4}n)} \leq \frac{\frac{9}{4}\sigma_ \infty^2n^{\frac{1}{3}}}{(\frac{1}{4}n)} = 9\sigma_ \infty^2n^{-\frac{2}{3}} \enspace .
\end{equation}
From~(\ref{eq:UpperBound1}), (\ref{eq:UpperBound1.5}),
(\ref{eq:UpperBound2}), and (\ref{eq:UpperBound3}), we now get the
following for $n$ sufficiently big:
\begin{equation}
\label{eq:UpperBound4}
P(c < 5\sqrt{n}) \geq 1-9\sigma_ \infty^2n^{-\frac{2}{3}} \enspace .
\end{equation}
From (\ref{eq:UpperBound4}), we conclude that $c < 5\sqrt{n}$ whp.
\end{proof}

We now use this lemma for the uniform model to prove a similar lemma for the more generic unknown distribution model.

\begin{lemma}\label{lem:chains}
For the unknown distribution model, the set of intervals $\mathcal{I}$ can be partitioned into $c$ chains such that
$$c = O(\sqrt{n}) \mbox{ whp} \enspace .$$
\end{lemma}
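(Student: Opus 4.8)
The plan is to reduce the unknown-distribution case to the uniform case handled in Lemma~\ref{lem:uniform} via a coupling / change-of-measure argument. The key observation is that the number of chains $c$ produced by Patience Sorting on the $b$-sequence (ordered by $a$-values) depends \emph{only on the relative order} of the sampled points, i.e. on the induced permutation, not on their actual coordinates. So the distribution of $c$ is completely determined by the distribution of this random permutation, and it suffices to show that under any bounded density the permutation is ``not too far'' from uniform in a way that preserves the $O(\sqrt n)$ bound whp.

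Concretely, I would proceed as follows. First, let $f$ be the bounded density, $\|f\|_\infty \le M$. Write $p_1,\dots,p_n$ for the i.i.d.\ points in the unit square and, as in the proof of Lemma~\ref{lem:uniform}, let $c$ be (twice) the number of decreasing subsequences Patience Sorting extracts from the $b$-values sorted by $a$-value; recall $c \le 2c'$ where $c'$ equals the longest increasing subsequence of the induced permutation. Second, I would bound $P(c' \ge K\sqrt n)$ under density $f$ by comparing with the uniform law. One clean way: if $U$ is the uniform density on the unit square, then for any event $A$ in the product space, $P_f(A) = E_U\!\left[\mathbf 1_A \prod_i f(p_i)\right] \le M^n P_U(A)$ — but this crude bound loses too much. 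The better route is a \emph{thinning/coupling} argument: generate the points from $f$ by rejection sampling from a larger uniform sample, or partition $[0,1]^2$ into $m\times m$ sub-squares and note that within each sub-square the conditional law of the points is uniform; the permutation induced by $f$-samples can then be coupled with a permutation built from independent uniform samples restricted to cells. Third, I would use monotonicity of the longest-increasing-subsequence length: combining the cell-wise uniform pieces and using subadditivity of LIS across a grid of $m^2$ cells, the LIS of the $f$-sample is dominated by $m$ times the maximum cell LIS plus lower-order terms, and each cell has roughly $n/m^2$ points distributed (conditionally) uniformly, to which Lemma~\ref{lem:uniform}'s estimate applies. Choosing $m$ a constant depending on $M$ then yields $c' = O(\sqrt n)$ whp, hence $c = O(\sqrt n)$ whp.

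The main obstacle is making the comparison with the uniform model quantitatively clean without blowing up constants exponentially in $n$ (as the naive likelihood-ratio bound does). The grid/cell decomposition handles this, but it requires two technical points: (i) controlling the number of points that land in each cell — a Chernoff bound gives each cell $\Theta(n/m^2)$ points whp, and a union bound over the $m^2=O(1)$ cells costs nothing asymptotically; and (ii) verifying the subadditivity step, namely that an increasing subsequence of the whole sample, when restricted to the cells it visits, decomposes into increasing subsequences within cells whose cell-indices themselves form an increasing ``staircase'', so its length is at most (number of cells on a monotone staircase) $\times$ (max cell LIS) $\le (2m-1)\cdot\max_{\text{cells}}\mathrm{LIS}$. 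Once these are in place, applying~(\ref{eq:mu}) and Chebyshev inside each cell exactly as in Lemma~\ref{lem:uniform} finishes the argument; the resulting hidden constant in $O(\sqrt n)$ is roughly $2(2m-1)\cdot 5/m = O(1)$, depending only on $M$ through the choice of $m$.
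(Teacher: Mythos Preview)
Your grid-decomposition route has a genuine gap: the assertion that ``within each sub-square the conditional law of the points is uniform'' is false. If the points are drawn from a bounded density $f$, then conditional on landing in a cell $S$ the density is $f\mathbf{1}_S / \int_S f$, which is uniform only when $f$ is constant on $S$. Since $f$ is merely bounded (not piecewise constant, not even continuous), no fixed grid of size $m=O(1)$ makes the conditional laws uniform, and so you cannot invoke Lemma~\ref{lem:uniform} inside the cells. Without that step the subadditivity bound $(2m-1)\cdot\max_{\text{cells}}\mathrm{LIS}$ gives you nothing: the within-cell LIS is exactly the quantity you are trying to control for a bounded non-uniform density, so the argument is circular.

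The idea you mention in passing and then abandon---``generate the points from $f$ by rejection sampling from a larger uniform sample''---is the right one, and it is precisely what the paper does. If $f \le B$, set $g = (1 - f/B)/(1 - 1/B)$, which is again a density, and note that the mixture $\tfrac{1}{B}f + (1-\tfrac{1}{B})g \equiv 1$ is the uniform law on $[0,1]^2$. Sample from this mixture repeatedly, tagging each draw as an $f$-draw (probability $1/B$) or a $g$-draw, until $n$ $f$-draws have accumulated; let $w$ be the total number of draws. All $w$ points are i.i.d.\ uniform, so by Lemma~\ref{lem:uniform} they partition into $C \le 5\sqrt{w}$ chains whp. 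Deleting the $g$-points from these chains still leaves chains for the $n$ $f$-points, hence $c \le C$. Finally $w \le 2Bn$ whp by the weak law of large numbers, giving $c \le 5\sqrt{2Bn}$ whp. This is the thinning/coupling you alluded to; it avoids both the $M^n$ blow-up of the likelihood-ratio bound and the circularity of the grid argument, and the hidden constant depends only on the bound $B$ for the density.
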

 
\begin{proof} 
Without loss of generality, we assume that the unknown distribution has a density $f$ with $f(a,b) \leq B$ with $B > 1$ (we can always increase $B$ if necessarry). Let the function $g: [0, 1] \times [0, 1] \rightarrow \mathbb{R}_+ \cup \{0\}$ be defined as follows:
$$g(a, b) = \frac{1-f(a,b)/B}{1-1/B} \enspace .$$
The function $g$ clearly qualifies as a probability density function. We now pick $n$ pairs $(a,b)$ independently by repeating the following procedure until $n$ pairs have been drawn using the $f$-distribution:
\begin{itemize}
\item Pick a pair $(a,b)$ using $f$ with probability $1/B$ or $g$ with probability $1-1/B$.
\end{itemize}
Let $w$ denote the total number of pairs picked by the procedure. Each time we pick a pair $(a,b)$, we use a mixture of the distributions $f$ and $g$: $1/B \cdot f + (1-1/B) \cdot g = 1$. This means that the $w$ pairs are picked using the uniform model described above.
Let $C$ denote the minimum number of chains that we can form for the $w$ pairs we have picked using both distributions. Using Lemma~\ref{lem:uniform}, we conclude that $C \leq 5\sqrt{w}$ whp.
If we remove an interval from a chain, the chain is still a chain. This means that it is easy to transform a set of chains for the $w$ points picked by both distributions into a set of chains for the $n$ pairs of endpoints picked using the $f$-distribution by deleting intervals picked using the $g$-distribution: $c \leq C$. 
Using the weak law of large numbers, we have $\frac{n}{w} \geq \frac{1}{2B}$ whp implying $2Bn\geq w$ whp.
Finally, we get 
\begin{equation}
\label{eq:ub}
c \leq 5\sqrt{2Bn} \mbox{ whp .}
\end{equation}
\end{proof}
As a side remark, it should be noted that we could replace $2$ in~(\ref{eq:ub}) with any number strictly greater than $1$. This shows that the upper bound matches and extends the result for the uniform distribution ($B=1$) from Lemma~\ref{lem:uniform}.

To illustrate a case where the premises of Lemma~\ref{lem:chains} are {\em not} satisfied, we can pick a number $u$ uniformly at random from $[0, 0.9]$ and form the interval $[u, u+0.1]$. In this case, we are not using the unknown distribution model for picking the endpoints from $[0, 1] \times [0, 1]$ (there is a set with measure $0$ that has probability $1$). It is easy to see that $c = n$ with probability $1$ in this case. Ironically, our algorithm works perfectly when intervals are picked using this stochastic process.

\begin{lemma}\label{lem:lower_bound_omega} For the unknown distribution model, we have the following:
$$\omega^\prime = \Omega(n) \mbox{ whp} \enspace .$$ 
\end{lemma}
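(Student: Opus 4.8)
The plan is to exploit the fact that a clique in an interval graph is exactly a family of intervals sharing a common point, so that in the notation of Lemma~\ref{lem:chi_prime_h} we have $\omega^\prime = \max_{t} g_t$. It therefore suffices to exhibit one \emph{fixed} point $t^\ast \in [0,1]$ that is covered by a constant fraction of the random intervals whp. First I would locate a good $t^\ast$ by an averaging argument: since $\int_0^1 \mathbf{1}[t \in I_1]\,dt = y_1 - x_1 = |a_1 - b_1|$, Fubini's theorem gives
$$\int_0^1 P(t \in I_1)\,dt = E\big(|a_1 - b_1|\big) \enspace .$$
Hence, once we know the right-hand side is bounded below by a constant $c_0 > 0$, the set $\{t \in [0,1] : P(t \in I_1) \geq c_0/2\}$ must have positive Lebesgue measure, and in particular we may fix some $t^\ast$ with $P(t^\ast \in I_1) \geq c_0/2$.

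The key step, and the place where the bounded-density hypothesis is essential, is the lower bound $E(|a_1 - b_1|) \geq c_0$. Here I would use $f \leq B$: the strip $\{(a,b) \in [0,1]^2 : |a-b| < \delta\}$ has Lebesgue measure at most $2\delta$, so $P(|a_1-b_1| < \delta) \leq 2B\delta$. Taking $\delta = \tfrac{1}{4B}$ yields $P\big(|a_1 - b_1| \geq \tfrac{1}{4B}\big) \geq \tfrac12$, whence $E(|a_1-b_1|) \geq \tfrac{1}{8B} =: c_0$. This is exactly the pathology ruled out by the model: without a density bound all the mass could sit arbitrarily close to the diagonal $a=b$, the intervals would be essentially degenerate, and $\omega^\prime$ would fail to grow (compare the explicit non-example discussed after Lemma~\ref{lem:chains}).

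Finally I would conclude by concentration. With $t^\ast$ fixed, $g_{t^\ast} = |\{i : t^\ast \in I_i\}|$ is a sum of $n$ i.i.d.\ indicator variables each with success probability $p^\ast := P(t^\ast \in I_1) \geq c_0/2$, so $E(g_{t^\ast}) \geq \tfrac{c_0}{2} n$. A Chernoff bound (or even a Chebyshev bound, in the style of the proof of Lemma~\ref{lem:uniform}) then gives $g_{t^\ast} \geq \tfrac{c_0}{4} n$ whp, and since $\omega^\prime \geq g_{t^\ast}$ we obtain $\omega^\prime = \Omega(n)$ whp, with the implicit constant depending only on the density bound $B$. The only mild technical point to check is the measurability of $t \mapsto P(t \in I_1)$ and the validity of the averaging step; both are routine. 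Combined with Lemma~\ref{lem:chains} and the inequality of Lemma~\ref{lem:chi_prime_h}, this is the last ingredient needed for Theorem~\ref{thm:maintheorem}.
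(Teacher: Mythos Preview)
Your argument is correct: lower-bound $E|a_1-b_1|$ via the density bound, use Fubini to turn this into $\int_0^1 P(t\in I_1)\,dt \geq c_0$, extract a point $t^\ast$ with $P(t^\ast\in I_1)\geq c_0/2$ by averaging, and then apply the law of large numbers (or Chernoff) to $g_{t^\ast}$. The paper follows the same two-step skeleton---find a deterministic point hit with positive probability, then invoke the weak law---but locates the point differently: it partitions the two off-diagonal triangles of $[0,1]^2$ into a countable family of dyadic squares $S_z\cup S'_z$ indexed by points $z\in(0,1)$, with the geometric property that $(a,b)\in S_z\cup S'_z$ forces $z\in[\min(a,b),\max(a,b)]$. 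Since the total mass is $1$, some $S_z\cup S'_z$ carries positive mass, and that $z$ serves as the fixed point.

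Your route has the advantage of yielding an explicit constant depending only on $B$, namely $\omega'\geq n/(32B)$ whp. The paper's route, on the other hand, never actually uses the \emph{boundedness} of the density---only that a density exists, so that the diagonal carries zero mass and the countable decomposition accounts for all of the probability. In particular, your remark that the bounded-density hypothesis is ``essential'' here is a slight overstatement: it is essential for \emph{your} lower bound on $E|a_1-b_1|$, but not for the lemma itself, and the non-example after Lemma~\ref{lem:chains} in fact still satisfies $\omega'=\Omega(n)$ (the point $0.5$ lies in each interval with probability $1/9$); that example was meant to break Lemma~\ref{lem:chains}, not this one.
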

 
\begin{proof}
The triangle above the diagonal $y=x$ in the square $[0,1] \times [0,1]$ can be partitioned into squares $S^\prime_z$ as illustrated in Fig.~\ref{fig:lower_bound_omega}. The triangle below the diagonal can be partitioned in a similar way using squares $S_z$. 
\begin{figure}[h]
  \centering 
  \includegraphics[scale=0.8]{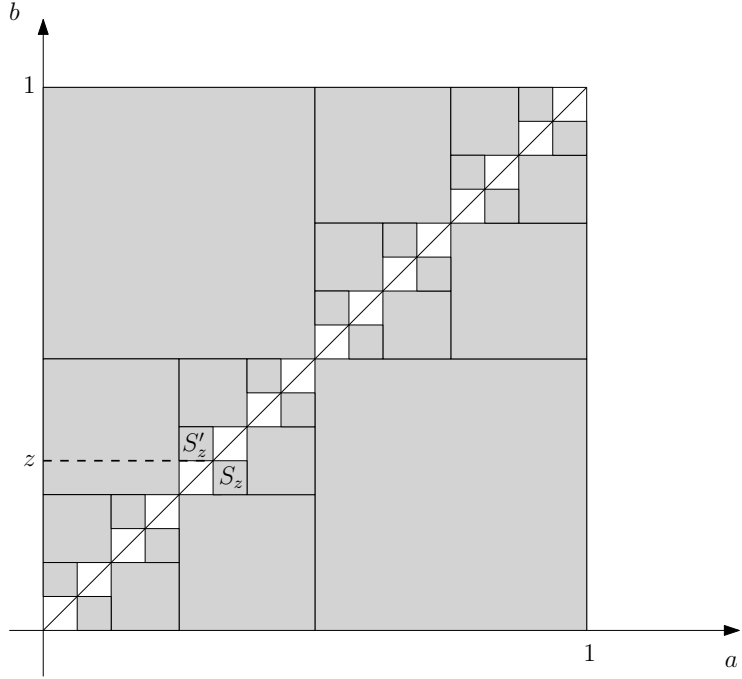}
  \caption{The square $[0,1] \times [0,1]$ (except the diagonal $y=x$) partitioned into smaller squares $S^\prime_z$ and $S_z$.}
  \label{fig:lower_bound_omega}
\end{figure}
We now have the following:
$$\iint_{[0,1] \times [0,1]} f(a, b) \,da \,db = \sum_z \iint_{S_z \cup S^\prime_z} f(a, b) \,da \,db = 1 \enspace .$$
There must be at least one $z$, $0 < z < 1$, such that
$$\iint_{S_z \cup S^\prime_z} f(a, b) \,da \,db > 0 \enspace ,$$
and we also have the following for all $i$:
$$P(z \in I_i) \geq \iint_{S_z \cup S^\prime_z} f(a, b) \,da \,db\enspace .$$
According to the weak law of large numbers, $z$ will be contained in $\Omega(n)$ intervals whp.
\end{proof}

We now present a proof of the main theorem of the paper:

\begin{proof}[Proof (Theorem~\ref{thm:maintheorem})]
  Algorithm B is an online algorithm using $O(\log n)$ time per item according to Lemma~\ref{lem:online_implementation}.
  From Lemma~\ref{lem:chi_prime_h} and
  \ref{lem:chains}, we conclude that $\chi^\prime_h
  \leq \frac{\omega^\prime}{h} + O(\sqrt{n})$ whp. The minimum number
  of colors $\chi_h$ satisfies $\chi_h \geq \frac{\omega^\prime}{h}$,
  so we now have the following:
\begin{equation}
  \frac{\chi^\prime_h}{\chi_h} \leq 1+\frac{h}{\omega^\prime} O(\sqrt{n})\mbox{ whp} \enspace .
\end{equation}
Finally, we use $\omega^\prime=\Omega(n)$
whp according to Lemma~\ref{lem:lower_bound_omega}.
\end{proof}

A corollary of Theorem~\ref{thm:maintheorem} is that
$\frac{\chi^\prime_h}{\chi_h}$ converges to $1$ in probability if $h =
o(\sqrt{n})$. It is not hard to prove that $\chi^\prime_h \leq \omega^\prime$, which implies $\frac{\chi^\prime_h}{\chi_h} \leq h$, so as another corollary, the expected value of $\frac{\chi^\prime_h}{\chi_h}$ converges to $1$ if $h$ is a constant.

\section{Experiments}\label{sec:experiments}

We have performed some experiments to verify the theoretical results and to examine the underlying constants for the big O notation. For the first type of experiments, we have used the unknown distribution model introduced in Definition~\ref{def:unknown} with a uniform distribution on $\{(a,b) \in [0, 1] \times [0, 1]: |a-b| \leq \ell\}$ for some number $\ell$ as the "unknown" distribution. In other words, we are choosing an interval with length up to $\ell$ uniformly at random. We use the notation $U(\ell)$ for this type of experiment.

For the second type of experiments, we go beyond the unknown distribution model and choose the center and the length of an interval independently using two normal (Gaussian) distributions (if the length is negative, then we ignore it and pick a new length). This means that any real number can be an interval endpoint. The notation $N(\mu_c, \sigma_c^2, \mu_l, \sigma_l^2)$ is used for the second type of experiments, where $\mu_c$ and $\sigma_c$ are the mean and the standard deviation for the center of an interval, and $\mu_l$ and $\sigma_l$ are the corresponding entities for the length of an interval. We go beyond the unknown distribution model to look into an even broader setting.

The eight distributions that we have used are $U(\ell)$, $\ell \in \{0.1, 0.3, 0.5, 0.8\}$, and $N(\mu_c, \sigma_c^2, \mu_l, \sigma_l^2)$, $(\mu_c, \sigma_c, \mu_l, \sigma_l) \in \{(0, 1, 1, 0.2),\allowbreak (0, 1, 1, 0.4),\allowbreak (0, 5, 1, 0.2),\allowbreak (0, 5, 1, 0.4)\}$.

The stack capacity has been fixed to $h=5$ for all the experiments. The experiments examine three perspectives corresponding to the three subsections in this section. For every combination of the eight distributions and three perspectives, we have generated $100$ random instances: one instance for each $n$ in the set $\{2000, 4000, 6000, 8000, \allowbreak 10000, \ldots, 200000\}$. Please note that no instances have been reused.

\subsection{Experiments for the Number of Chains}

Lemma~\ref{lem:chains} is a key lemma specifying an upper bound on $c$, i.e., the minimum number of chains that can be formed for an instance of the stacking problem. The values of $c/\sqrt{n}$ have been plotted against $n$ in Fig.~\ref{fig:c-n-graph} and Fig.~\ref{fig:c-n-graph-gaussian} for the uniform type and the Gaussian type of distributions, respectively. 
\begin{figure}[h]
  \centering 
  \includegraphics[scale=0.8]{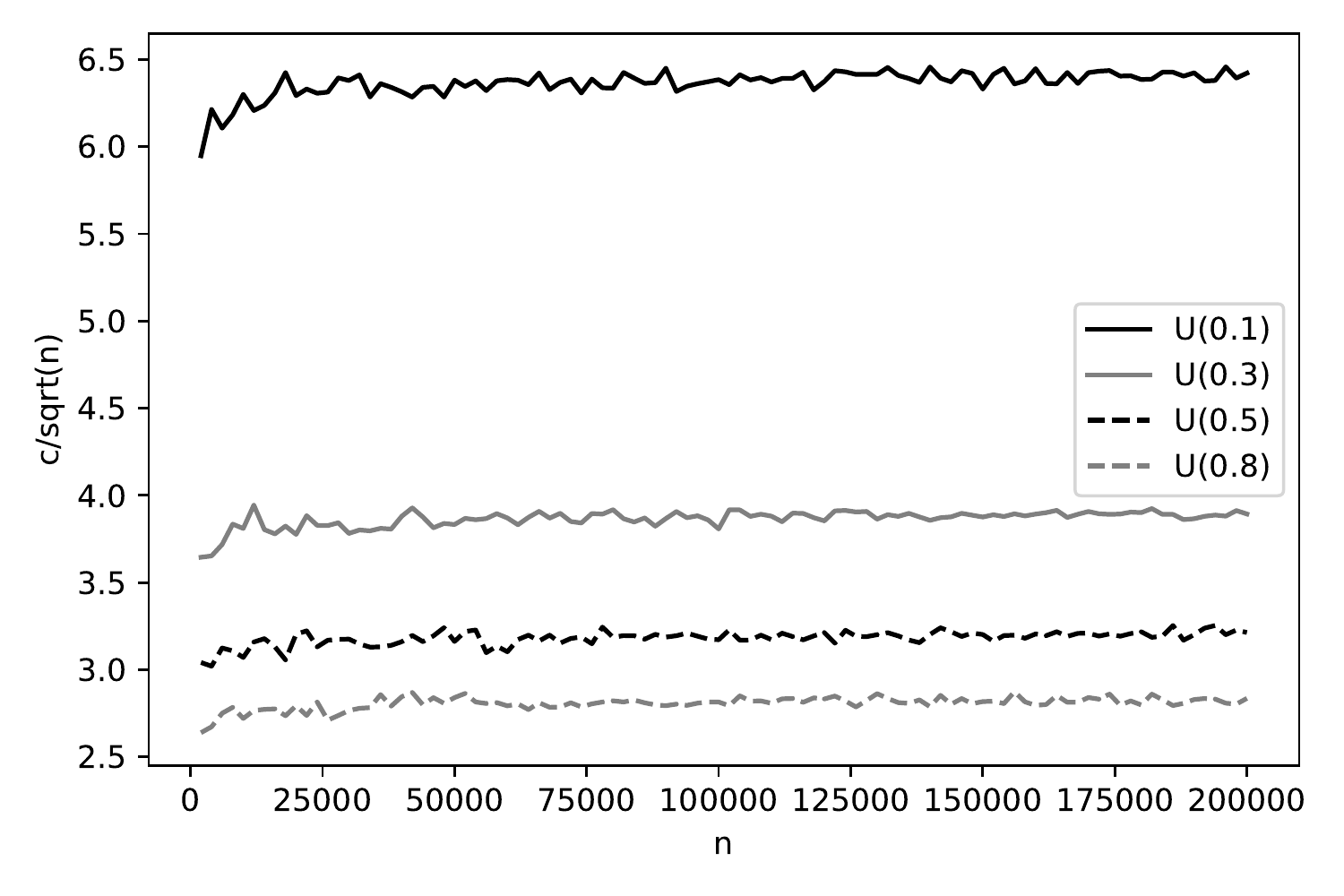}
  \caption{The values of $c/\sqrt{n}$ plotted against $n$ for the uniform type of distributions.}
  \label{fig:c-n-graph}
\end{figure}

\begin{figure}[h]
  \centering 
  \includegraphics[scale=0.8]{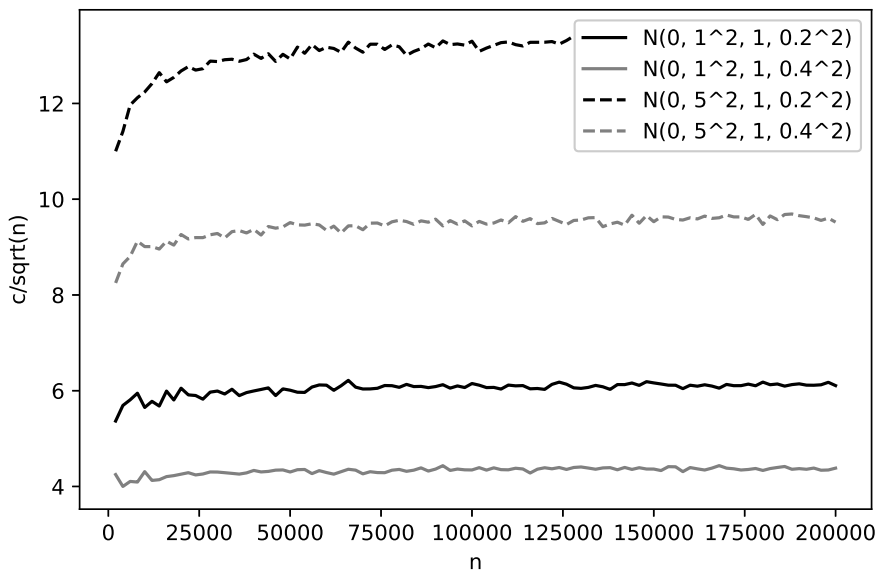}
  \caption{The values of $c/\sqrt{n}$ plotted against $n$ for the Gaussian type of distributions.}
  \label{fig:c-n-graph-gaussian}
\end{figure}
The experiments clearly verify Lemma~\ref{lem:chains} by showing that $c/\sqrt{n}=O(1)$ -- even when we go beyond our unknown distribution model using the Gaussian distributions. The underlying constant $k$ seems to be moderate, and $c/\sqrt{n} \leq k$ holds for {\underline {all}} the instances with $k$ depending on the actual distribution.

\subsection{Convergence Rate Experiments}

We now take a closer experimental look at our main contribution: Theorem~\ref{thm:maintheorem}. Our purpose is to verify the theorem and examine the actual convergence rate for the eight distributions that we consider. Directly inspired by our theorem, we have plotted $\left(\frac{\chi^\prime_h}{(\omega^\prime/h)}-1\right)\sqrt{n}$ against $n$ in Fig.~\ref{fig:relative-error-graph} and Fig.~\ref{fig:relative-error-graph-gaussian}. We remind the reader that $\chi_h \geq \frac{\omega^\prime}{h}$ so $\frac{\chi^\prime_h}{(\omega^\prime/h)}$ is an upper bound on the competitive ratio that we can efficiently compute (as mentioned earlier, we have no efficient procedure for computing $\chi_h$ for $h=5$ at the moment).
\begin{figure}[h]
  \centering 
  \includegraphics[scale=0.8]{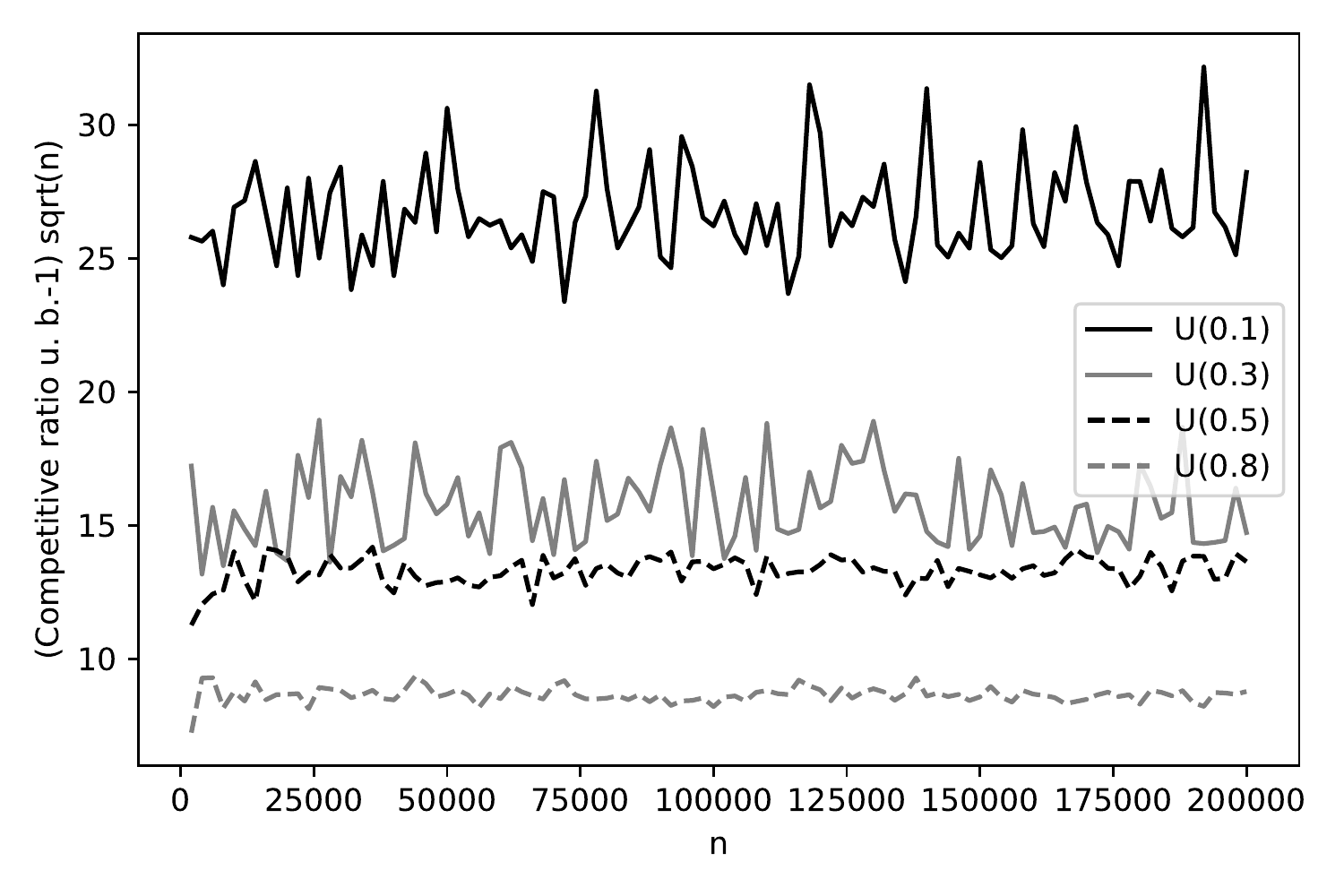}
  \caption{The expression $\left(\frac{\chi^\prime_h}{(\omega^\prime/h)}-1\right)\sqrt{n}$ plotted against $n$ for the uniform type of distributions.}
  \label{fig:relative-error-graph}
\end{figure}

\begin{figure}[h]
  \centering 
  \includegraphics[scale=0.8]{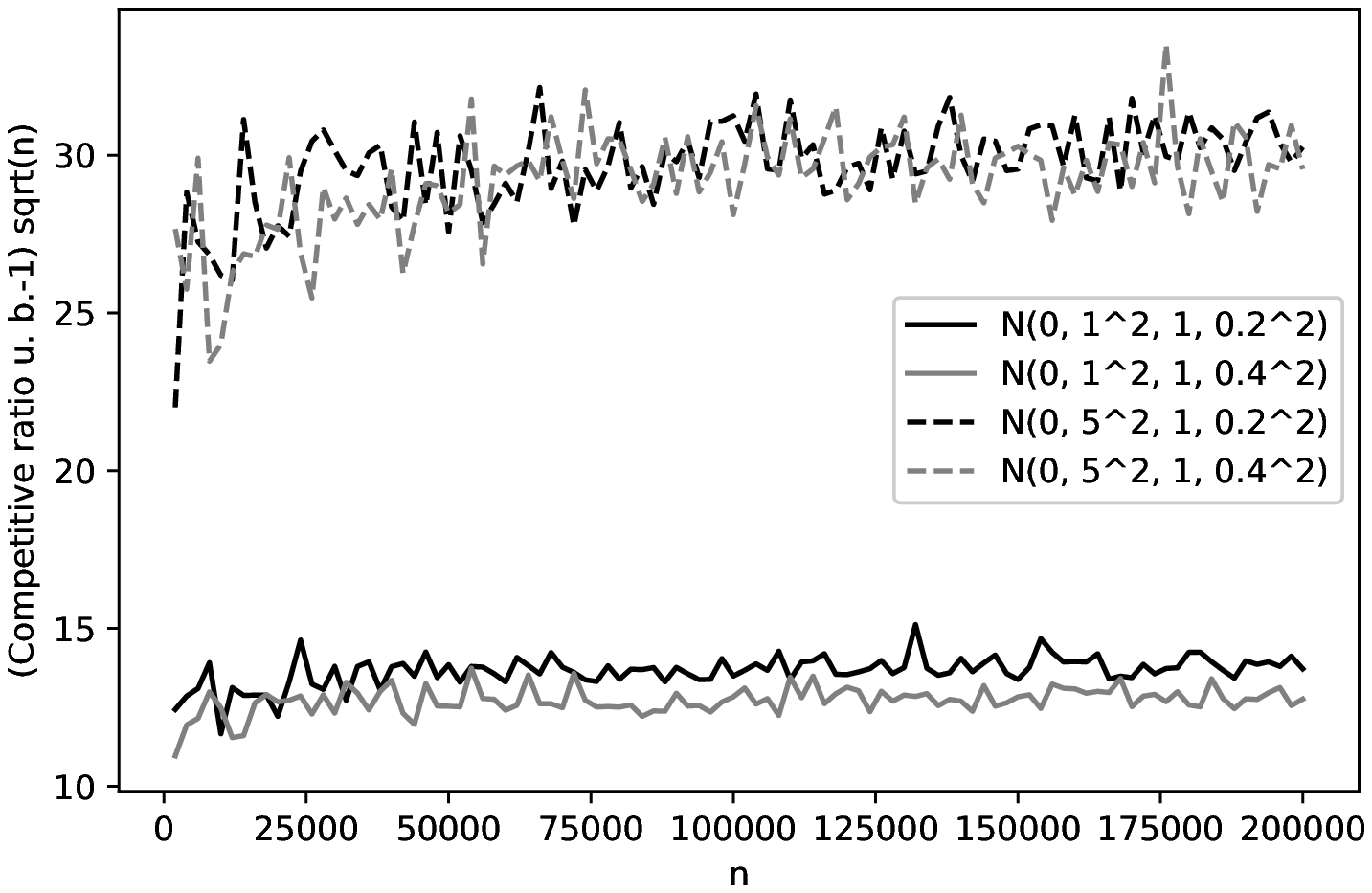}
  \caption{The graph shows the results of the experiments for the expression $\left(\frac{\chi^\prime_h}{(\omega^\prime/h)}-1\right)\sqrt{n}$ for the Gaussian type distributions.}
  \label{fig:relative-error-graph-gaussian}
\end{figure}
Similar to the experiments with the number of chains $c$, we conclude that $\left(\frac{\chi^\prime_h}{(\omega^\prime/h)}-1\right)\sqrt{n}=O(1)$ with an underlying moderate constant $k$. From the graphs, we can se that $\frac{\chi^\prime_h}{(\omega^\prime/h)}\leq 1+k/\sqrt{n}$ is satisfied for all our instances.

\subsection{Competitive Ratio Experiments}

For the sake of completeness, we ran some experiments and plotted the upper bound for the competitive ratio,  $\frac{\chi^\prime_h}{(\omega^\prime/h)}$, against $n$. The results are shown in Fig.~\ref{fig:competitive-ratio-graph} and Fig.~\ref{fig:competitive-ratio-graph-gaussian}.
\begin{figure}[h]
  \centering 
  \includegraphics[scale=0.8]{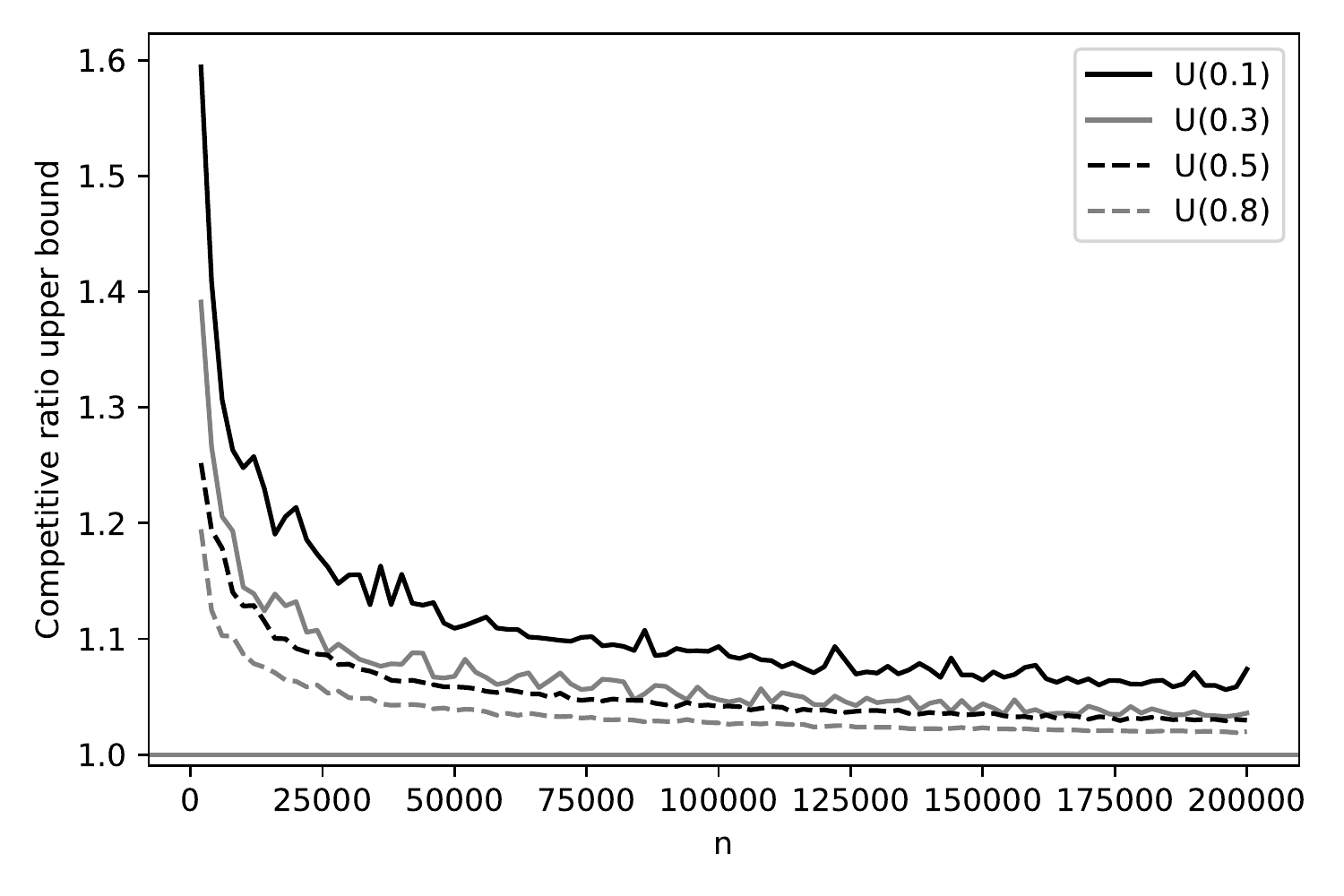}
  \caption{An upper bound for the competitive ratio plotted against $n$ for the uniform type experiments.}
  \label{fig:competitive-ratio-graph}
\end{figure}

\begin{figure}[h]
  \centering 
  \includegraphics[scale=0.8]{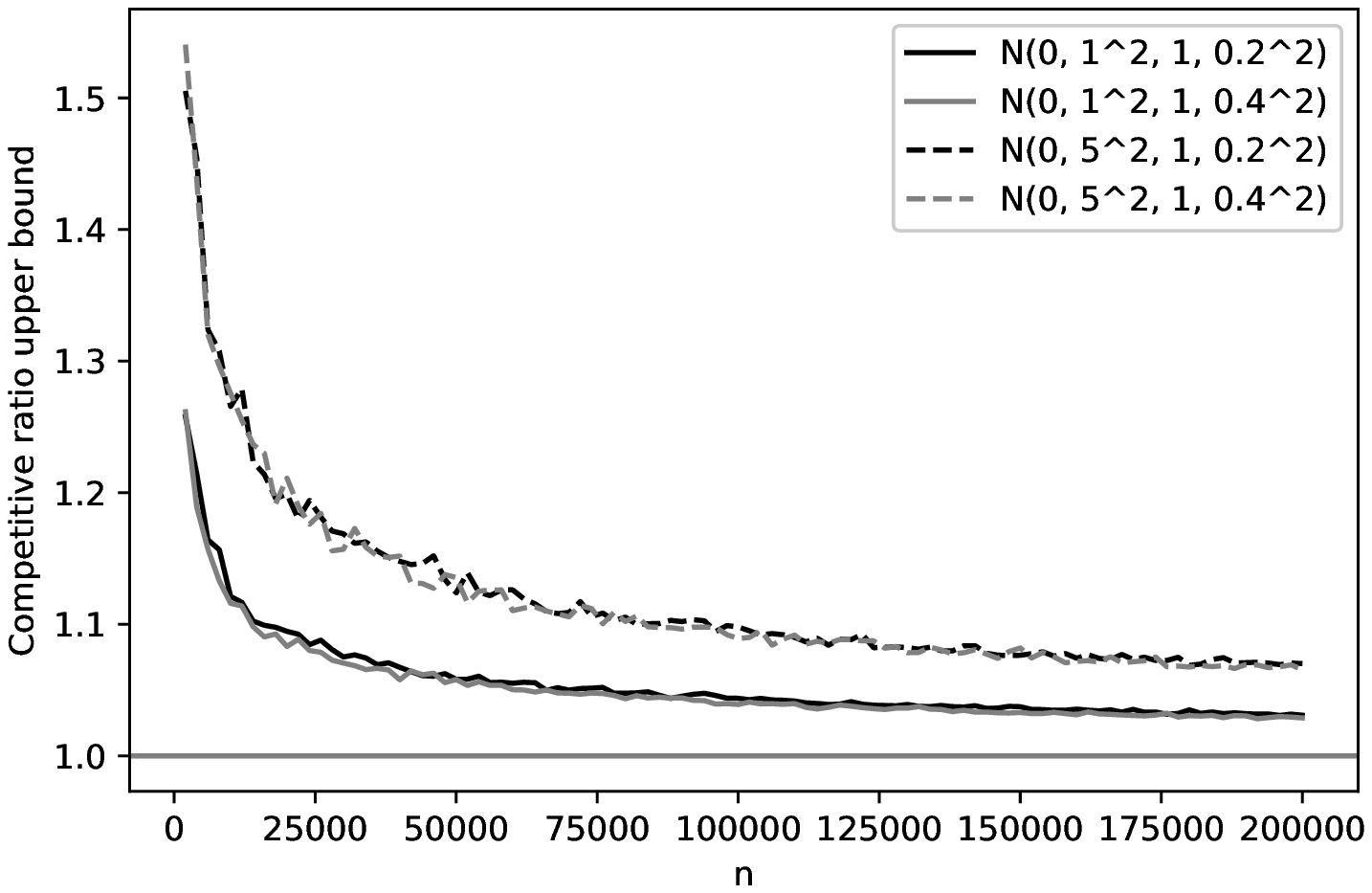}
  \caption{An upper bound for the competitive ratio plotted against $n$ for the Gaussian type experiments.}
  \label{fig:competitive-ratio-graph-gaussian}
\end{figure}
These graphs confirm that the competitive ratio converges to $1$ in probability.

\section*{Conclusion}

We have presented a simple polynomial time online algorithm for stacking with a competitive ratio that converges to $1$ in probability under the unknown distribution model. The main message of our paper is that such an algorithm exists. The experimental part of our paper shows that the results also have practical relevance. We do not think that our algorithm is better than similar algorithms presented in the literature, and we strongly believe that there are other asymptotically optimal algorithms for online stacking. 

\bibliographystyle{plain}

\bibliography{ProbabilisticAnalysisStackingFullPaper}

\end{document}